\documentclass{llncs}
\usepackage[OT2,OT1]{fontenc}
\newcommand\cyr
{
\renewcommand\rmdefault{wncyr}
\renewcommand\sfdefault{wncyss}
\renewcommand\encodingdefault{OT2}
\normalfont
\selectfont
}
\DeclareTextFontCommand{\textcyr}{\cyr}
\def\cprime{\char"7E }

\usepackage{latexsym}
\usepackage{amsmath,amssymb,amscd}
\usepackage{bussproofs}
\usepackage{graphicx}
\graphicspath{ {images/} }
\usepackage{xspace}
\usepackage{amsmath,amssymb,amscd,amsfonts,latexsym}
\usepackage{bussproofs}
\usepackage{color}
\usepackage{multicol}
\usepackage[english]{babel}
\usepackage{hyperref}
\usepackage{tikz}
\usetikzlibrary{matrix}
\usetikzlibrary{positioning}
\usepackage{yfonts}
\usepackage{natded}
\usepackage{proof}

\newcommand{\cblue}[1]{\textcolor{blue}{#1}}

\newcommand{\CL}{\mathsf{CL}}
\newcommand{\CPL}{\mathsf{CPL}}
\newcommand{\IL}{\mathsf{IL}}
\newcommand{\IPL}{\mathsf{IPL}}
\newcommand{\ML}{\mathsf{ML}}
\newcommand{\PA}{\mathsf{PA}}
\newcommand{\HA}{\mathsf{HA}}
\newcommand{\NE}{\mathsf{NE}}

\newcommand{\NK}{\mathsf{NK}}
\newcommand{\NEK}{\mathsf{NE_{K}}}

\newcommand{\ECI}{\mathsf{ECI}}

\date{}
\title{Glivenko's theorems from an ecumenical perspective}
\author{Luiz Carlos Pereira, Victor Barroso-Nascimento and Elaine Pimentel}
\institute{}

\begin{document}
\maketitle
\setcounter{secnumdepth}{3}

\begin{abstract}
In this paper, we revisit Glivenko's theorems, foundational results relating classical and intuitionistic logic, from an ecumenical perspective. We begin by discussing the historical context and significance of Glivenko's original contributions, and then examine their extensions and reinterpretations within ecumenical logical frameworks. Our analysis focuses on three ecumenical systems: Prawitz's natural deduction system $\NE$; the system $\NEK$, closely related to one introduced by Krauss in an unpublished manuscript; and the $\ECI$ system proposed by Barroso-Nascimento.
\end{abstract}

\section{Introduction}
In the late twenties and early thirties of last century, several results were obtained concerning some relations between classical logic ($\CL$) and intuitionistic logic ($\IL$), as well as between classical arithmetic ($\PA$) and intutionistic arithmetic ($\HA$). In 1925, Kolmogorov proved that classical propositional logic ($\CPL$) could be translated into intuitionistic propositional logic ($\IPL$)~\cite{Kolmogorov}. In 1933, G\"{o}del defined an interpretation of $\PA$ into $\HA$~\cite{Godel}  and in the same year Gentzen defined a different interpretation of $\PA$ into $\HA$~\cite{gentzen1969}. These interpretations/translations\footnote{See~\cite{Gilda-Oliva1,Gilda-Oliva2} for illuminating presentations and discussions of these translations, as well as their relation to two other translations due to Kuroda and Krivine \cite{Kur51,Kriv90}.} were defined as functions from the language of $\PA$ (or $\CL$, $\CPL$) into some fragment of the language of $\HA$ ($\IL$, $\IPL$) that aimed to preserve some important properties, like theoremhood or derivability.   What is known as \textit{Glivenko's theorems} in the area of logic belongs to this group of important results.

\noindent Valery Glivenko's  results were published in 1929, in French, in the \textit{Bulletins de la Classe des Sciences de la Acad\'{e}mie Royale de Belgique}, under the title \textit{Sur quelques points de la  logique de M. Brouwer}~\cite{Glivenko29}. The first Glivenko theorem establishes that if a formula $A$ is classically provable in $\CPL$, then its double negation is intuitionistically provable in $\IPL$\footnote{It is both interesting and important to observe (although it has not been frequently noted!) that when an intuitionistic logician proves, for example, $\neg \neg (A \vee B)$ while a classical logician proves $(A \vee B)$, the connective $\vee$ in $\neg \neg (A \vee B)$ does not carry the same meaning as the connective $\vee$ in $(A \vee B)$. The proof of $\neg \neg (A \vee B)$ belongs to the intuitionistic system and, although we use the same symbol $\vee$, within that system it has an intuitionistic interpretation rather than a classical one. The same observation applies to the familiar double-negation translations: the operators in the \textit{image-language} inherit their meaning from the semantics of the \textit{image-language}. This general point motivates the practice of distinguishing between classical and intuitionistic operators by means of different symbols, as is done in ecumenical systems. For instance, in Prawitz's ecumenical system one finds both a classical disjunction $\vee_{c}$ and an intuitionistic disjunction $\vee_{i}$.}.

\begin{theorem} 
If $\vdash_{\CPL} A$, then $\vdash_{\IPL} \neg \neg A$.
\end{theorem}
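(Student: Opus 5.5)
We argue by induction on the length of a derivation of $A$ in a Hilbert-style axiomatization of $\CPL$. Take the intuitionistic axiom schemes for $\to,\wedge,\vee,\neg$ (with $\neg B$ read as $B\to\bot$) together with the classical scheme $\neg\neg B\to B$, and modus ponens as the only rule. The claim to establish is: whenever $B$ occurs in a $\CPL$-derivation, $\vdash_{\IPL}\neg\neg B$.

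\textbf{Base cases.} If $B$ is an intuitionistic axiom, then $\vdash_{\IPL} B$, and since $B\to\neg\neg B$ is intuitionistically valid we get $\vdash_{\IPL}\neg\neg B$. If $B$ is an instance $\neg\neg C\to C$ of the classical axiom, we must show $\vdash_{\IPL}\neg\neg(\neg\neg C\to C)$. We derive it directly in natural deduction:
\begin{itemize}
\item Assume $\neg(\neg\neg C\to C)$; the goal is $\bot$.
\item From a hypothetical $C$ we get $\neg\neg C\to C$ by vacuous discharge. This contradicts the assumption, so we obtain $\neg C$.
\item From $\neg C$ and a hypothetical $\neg\neg C$ we get $\bot$, hence $C$ by ex falso. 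This yields $\neg\neg C\to C$, again contradicting the assumption.
\item Therefore $\bot$, and discharging the assumption gives $\neg\neg(\neg\neg C\to C)$.
\end{itemize}

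\textbf{Modus ponens step.} Suppose $B$ is obtained from earlier lines $C$ and $C\to B$. By the induction hypothesis, $\vdash_{\IPL}\neg\neg C$ and $\vdash_{\IPL}\neg\neg(C\to B)$. It then suffices to prove the intuitionistic lemma
$$\neg\neg C\to(\neg\neg(C\to B)\to\neg\neg B).$$
To prove it, assume $\neg\neg C$, $\neg\neg(C\to B)$ and $\neg B$, and aim for $\bot$. From hypothetical $C\to B$ and $C$ we get $B$, which contradicts $\neg B$; discharging $C$ gives $\neg C$, which contradicts $\neg\neg C$. So $\neg(C\to B)$ follows, and this contradicts $\neg\neg(C\to B)$. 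Two applications of modus ponens in $\IPL$ then give $\vdash_{\IPL}\neg\neg B$.

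\textbf{Main obstacle.} The step most specific to the propositional case is the base case for the classical axiom, i.e.\ showing that $\neg\neg(\neg\neg C\to C)$ is an intuitionistic theorem. Everything else uses only the standard fact that $\neg\neg$ is a monad-like operator in $\IPL$ (unit $B\to\neg\neg B$, compatibility with $\to$). If instead $\CPL$ were presented by natural deduction with classical reductio, we would use the same strategy: prove by induction on derivations that $\Gamma\vdash_{\CPL}A$ implies $\neg\neg\Gamma\vdash_{\IPL}\neg\neg A$. The delicate cases there are reductio and $\to$-introduction, and the latter needs $(\neg\neg C\to\neg\neg B)\to\neg\neg(C\to B)$. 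That principle is intuitionistically valid but has a slightly trickier proof, which is why I prefer the Hilbert-style route.
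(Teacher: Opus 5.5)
Your proof is correct. Intuitionistic axioms are handled by the unit $B\to\neg\neg B$. Your derivation of $\neg\neg(\neg\neg C\to C)$ is right, and it uses ex falso exactly where it must, since the statement fails for minimal logic. The lemma $\neg\neg C\to(\neg\neg(C\to B)\to\neg\neg B)$ handles modus ponens correctly. The only background fact you rely on is the standard completeness of your Hilbert axiomatization for $\CPL$.

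The paper itself gives no proof of this statement. It quotes it as Glivenko's 1929 result, and the surrounding text only points to translation-based (Kuroda) and normalization-based (Seldin, Raggio) connections. Those routes need more machinery, but they also yield the extension to $\forall$-free first-order formulas. Yours is the short, self-contained textbook argument for the propositional case.

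One correction to your closing remark: the classical-axiom base case is not the step that depends on being propositional. The derivation of $\neg\neg(\neg\neg C\to C)$ goes through verbatim for any first-order $C$, as do the modus ponens lemma and the base case for the quantifier axioms. What breaks is the generalization rule. From $\vdash\neg\neg(C\to B(a))$ you obtain $\vdash\neg\neg C\to\forall x\neg\neg B(x)$. Reaching $\neg\neg(C\to\forall x B(x))$ would need the double-negation shift $\forall x\neg\neg B(x)\to\neg\neg\forall x B(x)$, which is not intuitionistically valid. Indeed, $\neg\neg\forall x(P(x)\vee\neg P(x))$ is not intuitionistically provable. This is exactly why the theorem is restricted to propositional (or $\forall$-free) formulas. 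It also mirrors the paper's later observation that $\forall_c$ is the one classical operator lacking an internal analogue of the first Glivenko theorem.
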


In all of the logics considered in this paper, $\neg A$ is defined as $A \to \bot$, a convention we may also adopt in classical and intuitionistic logic.


This theorem is known to hold in full generality only for propositional logic. However, an immediate corollary of Seldin's normalization strategy for first-order classical logic\footnote{For an insightful presentation and discussion of Seldin's normalization strategy, see~\cite{GiuNaibo19}.} \cite{Sel89,PHCS10}, together with its translation into intuitionistic first-order logic due to Kuroda~\cite{Kur51,Gilda-Oliva1}, is that the theorem also holds for first-order formulas that do not contain universal quantifiers. Moreover, Andr\'{e}s Raggio derived the normalization theorem for Gentzen's classical Natural Deduction system $\NK$ as a consequence of Glivenko's first theorem~\cite{Raggio65}. This shows that the theorem is closely tied both to translation techniques and to normalization strategies.

The second Glivenko theorem establishes that if a formula $\neg A$ is provable $\CPL$, then the same formula  is provable in $\IPL$:

\begin{theorem} 
If $\vdash_{\CPL} \neg A$, then $\vdash_{\IPL} \neg A$.
\end{theorem}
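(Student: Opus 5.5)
The second theorem follows from the first one together with a purely intuitionistic fact about triple negations.

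Suppose $\vdash_{\CPL} \neg A$. Applying the first Glivenko theorem to the formula $\neg A$, we obtain
\[
\vdash_{\IPL} \neg\neg\neg A .
\]
It then suffices to show that
\[
\neg\neg\neg A \vdash_{\IPL} \neg A ,
\]
that is, that triple negation collapses to single negation intuitionistically. Composing the two facts, for example by an application of $\to$-elimination after discharging the hypothesis, yields $\vdash_{\IPL} \neg A$.

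The reduction $\neg\neg\neg A \vdash_{\IPL} \neg A$ has a short natural deduction derivation, using only the definition $\neg B := B \to \bot$ and the rules for implication:
\begin{enumerate}
\item Assume $\neg\neg\neg A$, and assume $A$ with the aim of deriving $\bot$.
\item To obtain $\neg\neg A$, assume $\neg A$. From $A$ and $\neg A$ derive $\bot$ by $\to$-elimination.
\item Discharge $\neg A$ by $\to$-introduction, giving $\neg\neg A$.
\item From $\neg\neg\neg A$ and $\neg\neg A$ obtain $\bot$ by $\to$-elimination.
\item Discharge $A$, concluding $\neg A$.
\end{enumerate}
This is just the instance $A \vdash \neg\neg A$ followed by contraposition. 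No rule for $\bot$ (neither intuitionistic absurdity nor classical reductio) is used, so the derivation is even valid in minimal logic.

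The only point needing care is that the first Glivenko theorem is applied to an arbitrary propositional formula, here $\neg A$. That is exactly its content, so there is no genuine obstacle in the propositional setting.

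If one preferred not to rely on the first theorem, one could instead argue semantically. Take a Kripke countermodel for $\neg A$ with some node $w \not\Vdash \neg A$. Then some $v \geq w$ forces $A$. Extending $v$ to a final node $u$ gives a classical valuation that makes $A$ true, since forcing is monotone, and hence makes $\neg A$ false, contradicting $\vdash_{\CPL} \neg A$. This route needs the finite model property, so I would present the syntactic reduction above as the main argument.
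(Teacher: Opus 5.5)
Your argument is correct and is the paper's own: apply the first Glivenko theorem to $\neg A$ to get $\vdash_{\IPL} \neg\neg\neg A$, then use $\neg\neg\neg A \vdash_{\IPL} \neg A$. The paper only asserts this last step; you derive it explicitly and correctly, and you rightly note that it holds even in minimal logic.
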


The second theorem is a trivial consequence of the first: by the first theorem, $\vdash_{\CPL} \neg A$ implies $\vdash_{\IPL} \neg \neg \neg A$, and $\vdash_{\IPL} \neg \neg \neg A$ intuitionistically implies $\vdash_{\IPL} \neg A$.  

In a way, Glivenko's theorems allow classical validities to be sought {\em constructively}.  This allows us to conceive propositional classical logic as a part of propositional intuitionistic logic, the latter being capable of making more {\em fine-grained} distinctions than the former.


This paper examines Glivenko's theorems through the lens of ecumenical logic, focusing on their implications and extensions within a unified logical framework. We begin by revisiting Glivenko's original results and their historical context, emphasizing their significance in bridging the gap between classical and intuitionistic logic. Building on this idea, we explore the application of ecumenical systems, such as those proposed by Prawitz, Krauss, and Barroso-Nascimento, to formalize and generalize Glivenko-type results. 
Finally, we argue that the ecumenical perspective sheds light on the interplay between classical and intuitionistic {\em reasoning}, offering a deeper understanding of their coexistence within a single system while respecting their distinct inferential principles. 




\section{Glivenko-type results in Prawitz's ecumenical system}

In 2015, Dag Prawitz proposed a natural deduction system where classical logic and intuitionistic logic could both be codified~\cite{DBLP:journals/Prawitz15}. Prawitz's system is an example of what nowadays is called an \textit{ecumenical system}~\cite{PP23}. Ecumenical systems allow two or more logics, even rival ones, to {\em coexist peacefully}. This peaceful coexistence means that the combination will not produce a collapse of a weaker logic into a stronger one, thus naturally preserving the essential characteristics of the logics involved in the combination\footnote{An abstract study of non-collapsing combinations of logics can be found in~\cite{RS25}.}.

In Prawitz's system, the classical logician and the intuitionistic logician would share the universal quantifier, conjunction, negation and the constant for the absurd (the so-called \textit{neutral operators}), but they would each have their own existential quantifier, disjunction and implication, with different meanings ($\exists_j,\vee_j,\to_j$, where $j\in\{i,c\}$ for the intuitionistic and classical versions, respectively). Also, classical and intuitionistic  $n$-ary predicate letters ($P_{c}^{n}, P_{i}^{n},\ldots$) co-exist  but have different meanings. Prawitz's main idea is that these different meanings are given by a semantical framework that can be accepted by both parties\footnote{From a semantic perspective, this can be done either by defining clauses for operators of stronger logics in a semantic framework for the weaker logic~\cite{PR17} or by also defining distinct semantic notions that co-exist peacefully~\cite{Barroso-Nascimento25}.}. Prawitz's ecumenical system, here called $\NE$, is shown in Fig.~\ref{fig:NE}. 

\begin{figure}[htp]
{\sc Intuitionistic rules}
\[
\begin{array}{lc@{\qquad}l}
\infer[{\to_{i}\mbox{-elim}}]{B}{A\to_{i} B & A}
&
\infer[{\to_{i}\mbox{-int}}]{ A\to_{i} B}
{\deduce{B}{\deduce{}{\deduce{\Pi}{\deduce{}{[A]}}}}} 
&
\infer[{\vee_i\mbox{-elim}}]{C}{ A\vee_i B & \deduce{C}{\deduce{}{\deduce{\Pi_1}{\deduce{}{[A]}}}}
 &\deduce{C}{\deduce{}{\deduce{\Pi_2}{\deduce{}{[B]}}}}} 
\\[5pt]
\infer[{\vee_i\mbox{-int}_j}]{ A_1\vee_i A_2}{ A_j}
&
\infer[\exists_i\mbox{-elim}]{ B}{ \exists_ix.A &  \deduce{B}{\deduce{}{\deduce{\Pi}{\deduce{}{[A(a/x)]}}}}}
&
\infer[\exists_i\mbox{-int}]{\exists_ix.A }{ A(t/x)}
\end{array}
\]
{\sc Classical rules}
\[
\begin{array}{lc@{\qquad}l}
\infer[{\to_c\mbox{-elim}}]{\bot}{ A\to_c B & A & \neg B}
&
\infer[{\to_c\mbox{-int}}]{A\to_c B}
{\deduce{\bot}{\deduce{}{\deduce{\Pi}{\deduce{}{[A,\neg B]}}}}} 
&
\infer[{\vee_c\mbox{-elim}}]{\bot}{A\vee_c B & \neg A &  \neg B} 
\\[5pt]
\infer[{\vee_c\mbox{-int}}]{A\vee_c B}{\deduce{\bot}{\deduce{}{\deduce{\Pi}{\deduce{}{[\neg A,\neg B]}}}}} 
&
\infer[\exists_c\mbox{-elim}]{\bot}{ \exists_c x A &  \forall_{i} x \neg A}
&
\infer[\exists_c\mbox{-int}]{ \exists_c x A}{\deduce{\bot}{\deduce{}{\deduce{\Pi}{\deduce{}{[\forall_{i} x \neg A]}}}}}
\\[5pt]
\infer[P^n_c\mbox{-elim}]{ \bot}
{
P^n_c(t_{1},...,t_{n}) &  \neg P^n_i(t_{1},...,t_{n})
}
&
\infer[P^n_c\mbox{-int}]{ P^n_c(t_{1},...,t_{n})}
{
\deduce{\bot}{\deduce{}{\deduce{\Pi}{\deduce{}{[\neg P^n_i(t_{1},...,t_{n})}]}}}
}
\end{array}
\]
{\sc Neutral rules}
\[
\begin{array}{lc@{\qquad}lc@{\qquad}lc@{\qquad}l}
\infer[{\wedge\mbox{-elim}_j}]{ A_j}{A_1\wedge A_2} 
& &
\infer[{\wedge\mbox{-int}}]{A \wedge B}{A \quad  B} 
& &
\infer[\bot\mbox{-elim}]{A}{ \bot}
& &\\[5pt]
\infer[\forall\mbox{-elim}]{A(t/x) }{\forall x.A}
& &
\infer[\forall\mbox{-int}]{ \forall x.A}{ A(a/x)}
\end{array}
\]

\caption{Ecumenical natural deduction system $\NE$.  In rules 
$\forall\mbox{-int}$ and $\exists_i\mbox{-elim}$, the parameter $a$ is fresh. In $\exists_i\mbox{-int}$ and $\forall\mbox{-elim}$, $t$ is a term.}\label{fig:NE}
\end{figure}

It is obvious that we cannot have \textit{Glivenko's theorems} in Prawitz's ecumenical system for the plain reason that we do not have \textit{two systems}, the intuitionistic system and the classical system, but only one, \textit{the ecumenical system}. However, we can have a kind of \textit{internal Glivenko}, which establishes Glivenko-type relations between classical operators and intuitionistic operators.
\begin{theorem}
For any formula $C$ such that the main operator of $C$ is a classical operator, we have that $C \vdash_{\NE} \neg \neg C^{*}$, where $C^{*}$ is the result of replacing the classical operator by the corresponding intutionistic operator.
\end{theorem}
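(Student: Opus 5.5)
We argue by cases on the main classical operator of $C$: $\to_c$, $\vee_c$, $\exists_c$, or a classical atom $P^n_c$. In each case we build a derivation in $\NE$ with the single open assumption $C$ ending in $\neg\neg C^{*}$. The common shape is: assume $\neg C^{*}$ (to be discharged by $\to_i$-int, since $\neg$ abbreviates $\to\bot$, here intuitionistic implication), derive from $\neg C^{*}$ the negative premisses that the classical elimination rule for $C$ requires, apply that elimination rule to $C$ to obtain $\bot$, and discharge $\neg C^{*}$.

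\emph{Case $C = A\vee_c B$.} From the hypothesis $[A]$ we get $A\vee_i B$ by $\vee_i$-int, hence $\bot$ against $\neg(A\vee_i B)$, hence $\neg A$ by discharging $A$. Likewise we obtain $\neg B$. Then $\vee_c$-elim applied to $A\vee_c B$, $\neg A$, $\neg B$ gives $\bot$, and discharging $\neg(A\vee_i B)$ yields $\neg\neg(A\vee_i B)$.

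\emph{Case $C = P^n_c(\vec t)$.} Here $C^{*}=P^n_i(\vec t)$ and $P^n_c$-elim gives $\bot$ directly from $P^n_c(\vec t)$ and $\neg P^n_i(\vec t)$.

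\emph{Case $C=\exists_c x A$.} Assume $[A(a/x)]$ with $a$ fresh. By $\exists_i$-int we get $\exists_i x A$, hence $\bot$ against $\neg\exists_i xA$, hence $\neg A(a/x)$. Then $\forall$-int gives $\forall x\neg A$, which is legitimate since $a$ occurs neither in $\neg\exists_i x A$ nor in $C$. Now $\exists_c$-elim gives $\bot$.

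\emph{Case $C = A\to_c B$.} This is the only case needing some care. The rule $\to_c$-elim requires $A$ and $\neg B$, so instead of $\neg A$ we must supply $A$ itself as a temporary hypothesis. We assume $\neg(A\to_i B)$. To get $\neg B$, assume $[B]$; by $\to_i$-int with a vacuous discharge of $A$ we obtain $A\to_i B$, hence $\bot$, and discharging $B$ gives $\neg B$. Next assume $[A]$ and apply $\to_c$-elim to $A\to_c B$, $A$, $\neg B$ to get $\bot$. This $\bot$ depends on $A$, so it does not yet close the argument; we still have to discharge $A$ while keeping the contradiction with $\neg(A\to_i B)$. From $\bot$ we get $B$ by $\bot$-elim, so $\to_i$-int discharging $A$ yields $A\to_i B$. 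Against $\neg(A\to_i B)$ this gives $\bot$ with only $C$ and $\neg(A\to_i B)$ open. Discharging $\neg(A\to_i B)$ then gives $\neg\neg(A\to_i B)$.

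The main obstacle is thus the $\to_c$ case, because its elimination rule does not fit the purely negative pattern of the other cases. It is resolved by reintroducing $A\to_i B$ through $\bot$-elim, as above. The remaining checks are routine: the vacuous discharge in $\to_i$-int is admissible, and the eigenvariable condition holds in the $\exists_c$ case.
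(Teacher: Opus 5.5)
Your proposal is correct and follows essentially the same route as the paper. Both argue by cases on the classical operator, obtain the negative premisses of its elimination rule from $\neg C^{*}$, and handle the $\to_c$ case identically, using a vacuous $\to_i$-introduction to get $\neg B$ and then $\bot$-elimination followed by $\to_i$-introduction (discharging $A$) to recover $A\to_i B$. Your extra case for classical atoms $P^n_c$, which the paper does not treat, is a correct and harmless addition.
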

\begin{proof}
We examine below the case of each classical operator:
\begin{enumerate}
\item $C$ is $A \vee_{c} B$. We can prove that $A \vee_{c} B \vdash_{\NE}  \neg \neg (A \vee_{i} B)$ as follows:

\begin{prooftree}
\AxiomC{$A \vee_{c} B$}
\AxiomC{$[A]^{1}$}
\UnaryInfC{$A \vee_{i} B$}
\AxiomC{$[\neg (A \vee_{i} B)]^{3}$}
\BinaryInfC{$\bot$}
\RightLabel{1}
\UnaryInfC{$\neg A$}
\AxiomC{$[B]^{2}$}
\UnaryInfC{$A \vee_{i} B$}
\AxiomC{$[\neg (A \vee_{i} B)]^{3}$}
\BinaryInfC{$\bot$}
\RightLabel{2}
\UnaryInfC{$\neg B$}
\TrinaryInfC{$\bot$}
\RightLabel{3}
\UnaryInfC{$\neg \neg (A \vee_{i} B)$}
\end{prooftree}

\item $C$ is $A \to_{c} B$. We can prove that $A \vee B \vdash_{\NE}  \neg \neg (A \to_{i} B)$ as follows:

\begin{prooftree}
\AxiomC{$A \to_{c} B$}
\AxiomC{$[A]^{2}$}
\AxiomC{$[B]^{1}$}
\UnaryInfC{$A \to_{i} B$}
\AxiomC{$[\neg (A \to_{i} B)]^{3}$}
\BinaryInfC{$\bot$}
\RightLabel{1}
\UnaryInfC{$\neg B$}
\TrinaryInfC{$\bot$}
\UnaryInfC{$B$}
\RightLabel{2}
\UnaryInfC{$A \to_{i} B$}
\AxiomC{$[\neg (A \to_{i} B)]^{3}$}
\BinaryInfC{$\bot$}
\RightLabel{3}
\UnaryInfC{$\neg \neg (A \to_{i} B)$}
\end{prooftree}

\item $C$ is $\exists_{c} xA(x)$. We can prove that $\exists_{c} xA(x) \vdash_{\NE}  \neg \neg \exists_{i} xA(x)$ as follows:

\begin{prooftree}
\AxiomC{$\exists_c xA(x)$}
\AxiomC{$[A(a/x)]^{1}$}
\UnaryInfC{$\exists_i xA(x)$}
\AxiomC{$[\neg (\exists_i xA(x))]^{2}$}
\BinaryInfC{$\bot$}
\RightLabel{1}
\UnaryInfC{$\neg A(a/x)$}
\UnaryInfC{$\forall x \neg A(x)$}
\BinaryInfC{$\bot$}
\RightLabel{2}
\UnaryInfC{$\neg \neg (\exists_i xA(x))$}
\end{prooftree}

\end{enumerate}
\end{proof}

\vspace{1.0cm}
\noindent We can also have an \textit{internal} result  corresponding to Glivenko's second theorem:
\begin{theorem}
For any formula $C$ such that the main operator of $C$ is a classical operator, we have that $\neg C \vdash_{\NE} \neg C^{*}$, where $C^{*}$ is the result of replacing the classical operator by the corresponding intutionistic operator.
\end{theorem}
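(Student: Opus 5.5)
The plan is to derive this from the previous theorem in the same way the paper derives Glivenko's second theorem from the first. The key observation is that, for each classical operator, the intuitionistic version implies the classical one: $C^{*} \vdash_{\NE} C$. Given that, assume $\neg C$ and suppose $[C^{*}]$. We obtain $C$, apply $\to$-elimination with $\neg C$ to get $\bot$, and discharge $C^{*}$ by $\to_i$-introduction to conclude $\neg C^{*}$. Recall that $\neg$ is defined as $\to\bot$, and that negation is a neutral operator handled by the intuitionistic implication rules.

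An alternative route goes through the previous theorem directly. From it, $C \vdash_{\NE} \neg\neg C^{*}$, so contraposition gives $\neg\neg\neg C^{*} \vdash \neg C$. That is the wrong direction, however, so the direct route via $C^{*}\vdash C$ is the one I will use. The work is therefore to establish $C^{*}\vdash_{\NE} C$ case by case.

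\emph{Disjunction.} From $A\vee_i B$, assume $[\neg A,\neg B]$. Perform $\vee_i$-elimination: in the branch $[A]$, derive $\bot$ from $A$ and $\neg A$; in the branch $[B]$, derive $\bot$ from $B$ and $\neg B$. This yields $\bot$, and $\vee_c$-introduction discharges $\neg A,\neg B$ to give $A\vee_c B$.

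\emph{Implication.} From $A\to_i B$, assume $[A,\neg B]$. Obtain $B$ by $\to_i$-elimination, then $\bot$ using $\neg B$, and apply $\to_c$-introduction to get $A\to_c B$.

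\emph{Existential.} From $\exists_i xA$, assume $[\forall x\neg A]$. Use $\exists_i$-elimination with a fresh $a$: from $[A(a/x)]$ and $\neg A(a/x)$, the latter obtained by $\forall$-elimination, derive $\bot$. Note that $a$ does not occur in the open assumption $\forall x\neg A$, so the eigenvariable condition holds. Then $\exists_c$-introduction gives $\exists_c xA$.

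\emph{Atomic predicates.} If the classical atoms $P_c$ count as operators here, then from $P_i(\vec t)$ and $[\neg P_i(\vec t)]$ we get $\bot$, and $P_c$-introduction yields $P_c(\vec t)$.

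None of these steps is a serious obstacle; each is a two-line derivation. The only point requiring care is the eigenvariable condition in the $\exists$ case. Once $C^{*}\vdash C$ is established, the contraposition argument above completes each case uniformly.
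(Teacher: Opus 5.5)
Your proposal is correct and is essentially the paper's proof: the paper also reduces the claim to $C^{*} \vdash_{\NE} C$ and contraposes, giving only the $\vee$ case as an example. You additionally check the $\to$, $\exists$ and atomic cases, all correctly (including the eigenvariable condition), and you were right to abandon the route through the previous theorem, since contraposing $C \vdash_{\NE} \neg\neg C^{*}$ only yields the converse $\neg C^{*} \vdash_{\NE} \neg C$.
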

\begin{proof} The result follows directly from the fact that $C^{*} \vdash_{\NE} C$.\\
For example:
\begin{prooftree}
\AxiomC{$A \vee_{i} B$}
\AxiomC{$[A]^{1}$}
\AxiomC{$[\neg A]^{3}$}
\BinaryInfC{$\bot$}
\AxiomC{$[B]^{2}$}
\AxiomC{$[\neg B]^{4}$}
\BinaryInfC{$\bot$}
\RightLabel{1,2}
\TrinaryInfC{$\bot$}
\RightLabel{3,4}
\UnaryInfC{$A \vee_{c} B$}
\end{prooftree}
\end{proof}
\section{Glivenko-type results in the ecumenical system $\ECI$}
Another ecumenical system, the system  $\ECI$, was introduced by Victor Barroso-Nascimento, and its main idea is:

\begin{quote}
   [$\ldots$] the generalist approach consists of adding general rules which allow the introduction of ecumenical versions of any formula. Thus, instead of directly defining assertion conditions for specific ecumenical connectives, the generalist approach aims to define assertion conditions for ecumenical formulas in general, so as we can introduce the remaining ecumenical operators as special cases of the general rule ~\cite[pg. 38, translated from the original in Portuguese]{Barroso-Nascimento}).
\end{quote}

Instead of the ecumenical operators of Prawitz's ``inferentialist" approach\footnote{For reasons discussed in the last section of this paper and pointed out by Luiz Carlos Pereira in other contexts, using ``inferentialist" and ``generalist" to refer to the two approaches is rather misleading.}, we have ecumenical {\em formulas}, $A$ 
and $A^{c}$\footnote{For easing the notation, we will omit the superscript of the intuitionistic formulas, marking only the classical ones.}. The system $\ECI$ is obtained from Prawitz's natural deduction system for intuitionistic logic by adding the following rules:

\begin{prooftree}
\AxiomC{$[\neg A]$}
\noLine
\UnaryInfC{$\Pi$}
\noLine
\UnaryInfC{$\bot$}
\RightLabel{$I_{C}$}
\UnaryInfC{$A^{c}$}
\DisplayProof
\qquad
\AxiomC{$A^{c}$}
\AxiomC{$\neg A$}
\RightLabel{$E_{C}$}
\BinaryInfC{$\bot$}
\end{prooftree}

\noindent We can immediately see that \textit{Glivenko-type} results are somehow trivial in $\ECI$\footnote{The system $\ECI$ is defined over a language that does not have any explicit sign that would \textit{mark} a formula as being \textit{intuitionistic}. It is implicit that if the formula is not \textit{labeled} with the sign/constant $c$, then it is an intuitionistic formula. A formula without any occurrence of the label $c$ is a \textit{full intuitionistic formula}.}:
\begin{theorem}\label{thm:G1}
$A^{c} \vdash_{\ECI} \neg \neg A$
\end{theorem}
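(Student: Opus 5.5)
The derivation is a single application of the classical elimination rule $E_{C}$ followed by an intuitionistic introduction of implication. Since $\neg\neg A$ abbreviates $\neg A \to \bot$, it suffices to derive $\bot$ from the premise $A^{c}$ together with an auxiliary hypothesis $\neg A$, and then discharge that hypothesis.

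Concretely, I would proceed as follows. Assume $A^{c}$ as an open premise and $[\neg A]^{1}$ as a temporary hypothesis. Apply $E_{C}$ to these two formulas to obtain $\bot$. Then apply $\to$-introduction from the intuitionistic base of $\ECI$, discharging hypothesis $1$, to conclude $\neg A \to \bot$, that is, $\neg\neg A$. The only remaining open assumption is $A^{c}$, which gives $A^{c} \vdash_{\ECI} \neg\neg A$. The derivation is:
\begin{prooftree}
\AxiomC{$A^{c}$}
\AxiomC{$[\neg A]^{1}$}
\RightLabel{$E_{C}$}
\BinaryInfC{$\bot$}
\RightLabel{1}
\UnaryInfC{$\neg \neg A$}
\end{prooftree}

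There is no real obstacle here; the only point to check is bookkeeping. The rule $E_{C}$ takes exactly $A^{c}$ and $\neg A$, with $A$ the intuitionistic counterpart of $A^{c}$, as its premises. And $\neg$ is defined as $\to\bot$, so the final step is ordinary implication introduction available in the underlying intuitionistic system.

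I would also note that the converse $\neg\neg A \vdash_{\ECI} A^{c}$ holds just as directly: apply $\to$-elimination to $\neg\neg A$ and a hypothesis $[\neg A]$ to get $\bot$, then use $I_{C}$. This confirms the sense in which Glivenko-type results are trivial in $\ECI$: $A^{c}$ is interderivable with $\neg\neg A$.
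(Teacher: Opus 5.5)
Your proof is correct and is exactly the paper's derivation: $E_{C}$ applied to $A^{c}$ and the hypothesis $[\neg A]^{1}$ gives $\bot$, and $\to$-introduction discharging $1$ gives $\neg\neg A$. Your remark on the converse $\neg\neg A \vdash_{\ECI} A^{c}$ via $I_{C}$ also matches the observation the paper makes right afterwards, in the subsection on the translation $t\ECI$.
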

\begin{proof}
Immediate by the following derivation 
\begin{prooftree}
\AxiomC{$A^{c}$}
\AxiomC{$[\neg A]^{1}$}
\BinaryInfC{$\bot$}
\RightLabel{1}
\UnaryInfC{$\neg \neg A$}
\end{prooftree}

\end{proof}

%

\begin{theorem}\label{thm:G2}
$(\neg A)^{c} \vdash_{\ECI} \neg A$\footnote{It is interesting to observe that $(\neg A)^{c} \dashv \vdash_{\ECI} \neg (A^{c})$.}
\end{theorem}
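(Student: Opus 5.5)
Since $(\neg A)^{c}$ is a classical formula, the only rule of $\ECI$ that can consume it is $E_{C}$. Its minor premise has the form $\neg(\neg A)$, that is, $\neg\neg A$, and its conclusion is $\bot$. So the plan is to manufacture $\neg\neg A$ under the hypothesis $A$, apply $E_{C}$ to reach $\bot$, and then discharge $A$ by $\to$-introduction (recall $\neg A$ is $A\to\bot$). This is essentially Theorem~\ref{thm:G1} instantiated at $\neg A$, which gives $(\neg A)^{c}\vdash_{\ECI}\neg\neg\neg A$, followed by the usual intuitionistic reduction of triple negation to single negation. That reduction is exactly how the second Glivenko theorem was derived from the first in the introduction.

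Concretely, the derivation has three steps.

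First, assume $[A]^{1}$ and $[\neg A]^{2}$, and apply $\to$-elimination to get $\bot$.

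Second, discharge $2$ by $\to$-introduction to obtain $\neg\neg A$, which depends only on $A$.

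Third, apply $E_{C}$ to the premises $(\neg A)^{c}$ and $\neg\neg A$ to get $\bot$, and discharge $1$ by $\to$-introduction to conclude $\neg A$.

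In bussproofs notation the tree is:
\begin{prooftree}
\AxiomC{$(\neg A)^{c}$}
\AxiomC{$[A]^{1}$}
\AxiomC{$[\neg A]^{2}$}
\BinaryInfC{$\bot$}
\RightLabel{2}
\UnaryInfC{$\neg\neg A$}
\RightLabel{$E_{C}$}
\BinaryInfC{$\bot$}
\RightLabel{1}
\UnaryInfC{$\neg A$}
\end{prooftree}

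There is no real obstacle here. The one point to check is that $E_{C}$, instantiated at the formula $\neg A$, has $\neg\neg A$ as its minor premise. This holds because in $\ECI$ negation is defined as $\to\bot$ and the superscript $c$ applies to the whole formula $\neg A$. All other rules used are intuitionistic rules already present in $\ECI$.

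For the footnoted equivalence $(\neg A)^{c}\dashv\vdash_{\ECI}\neg(A^{c})$, which the paper mentions only in passing, the same pattern works in both directions.

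For the left-to-right direction, assume $A^{c}$. By Theorem~\ref{thm:G1} we get $\neg\neg A$, and the argument above produces $\bot$ from $(\neg A)^{c}$ and $\neg\neg A$. Discharging $A^{c}$ gives $\neg(A^{c})$.

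For the right-to-left direction, assume $\neg\neg A$. From $\neg A$ we get $A^{c}$ by $I_{C}$, using the derivation $\neg A,\neg\neg A\vdash\bot$. Then $\neg(A^{c})$ yields $\bot$, which refutes $\neg A$, so $\neg\neg A$ together with $\neg(A^{c})$ gives $\bot$. Discharging $\neg\neg A$ and applying $I_{C}$ gives $(\neg A)^{c}$.
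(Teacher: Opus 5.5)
Your proof is correct and matches the paper's own argument: the paper derives the statement from Theorem~\ref{thm:G1} instantiated at $\neg A$ together with $\neg\neg\neg A \vdash_{\ECI} \neg A$, and your explicit tree is just the composition of those two derivations with the resulting $\neg\neg\neg A$ detour removed. Your extra treatment of the footnote's equivalence is also sound, though in the right-to-left direction the phrase ``which refutes $\neg A$'' is superfluous, since $\neg A$ has already been discharged by the $I_{C}$ step that produces $A^{c}$.
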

\begin{proof} Direct from Theorem~\ref{thm:G1} and by the fact that $\neg \neg \neg A \vdash_{\ECI} \neg A$.
\end{proof}

\subsection{The translation $t\ECI$}
We have shown that $A^{c} \vdash_{\ECI} \neg \neg A$, but we can trivially also show that\\ $\neg \neg A \vdash_{\ECI} A^{c}$:
\begin{prooftree}
\AxiomC{$[\neg A]^{1}$}
\AxiomC{$\neg \neg A$}
\BinaryInfC{$\bot$}
\RightLabel{1}
\UnaryInfC{$A^{c}$}
\end{prooftree}

%

\noindent In a certain sense, the equivalence between $\neg \neg A$ and $A^{c}$ provides a justification for the translation presented on page~52 of~\cite{Barroso-Nascimento}, here reformulated in a recursive manner.

\begin{definition}\label{TECI}
 $t\ECI[A]$ is defined as follows, where A is a formula of $\ECI$:

    \begin{enumerate}
        \item $t\ECI[p] = p$, for atomic $p$;
        \smallskip
        \item $t\ECI[\bot] = \bot$;
        \smallskip
        \item $t\ECI[A \star B] = t\ECI[A] \star t\ECI[B]$, for $\star \in \{\land, \lor, \to\};$
        \smallskip
        \item $t\ECI[A^{c}] = \neg \neg \  t\ECI[A]$.
    \end{enumerate}
\end{definition}

This translation is used to reduce the problem of \emph{normalization} (and other proof-theoretical results) for 
$\ECI$ to normalization in intuitionistic logic. However, if we are interested in \emph{identifying} the effects of classical reasoning within a derivation, there is another aspect of this translation that deserves attention.

Assume that we have a derivation 
in classical propositional logic with several applications of the classical \textit{reductio}.
\begin{prooftree}
\AxiomC{$[\neg A]$}
\noLine
\UnaryInfC{$\Pi$}
\noLine
\UnaryInfC{$\bot$}
\UnaryInfC{$A$}
\end{prooftree}
Suppose now that we replace each such application by an application of $\to$-int:
\begin{prooftree}
\AxiomC{$[\neg A]^1$}
\noLine
\UnaryInfC{$\Pi$}
\noLine
\UnaryInfC{$\bot$}
\RightLabel{1}
\UnaryInfC{$\neg \neg A$}
\end{prooftree}
It is easy to see that the resulting derivation need not be intuitionistically valid. Consider, for instance, the following  derivation: 
\begin{prooftree}
\AxiomC{$[\neg A]$}
\noLine
\UnaryInfC{$\Pi$}
\noLine
\UnaryInfC{$\bot$}
\UnaryInfC{$A$}
\AxiomC{$\Pi'$}
\noLine
\UnaryInfC{$A \to B$}
\BinaryInfC{$B$}
\end{prooftree}
Under the above transformation, this derivation becomes
\begin{prooftree}
\AxiomC{$[\neg A]$}
\noLine
\UnaryInfC{$\Pi$}
\noLine
\UnaryInfC{$\bot$}
\UnaryInfC{$\neg \neg A$}
\AxiomC{$\Pi'$}
\noLine
\UnaryInfC{$A \to B$}
\BinaryInfC{$B$}
\end{prooftree}

\noindent which is not a legitimate intuitionistic derivation.

The same problem would  arise if we take into consideration the translation $t\ECI$. The derivation would be be transformed into:
\begin{prooftree}
\AxiomC{$[\neg A]$}
\noLine
\UnaryInfC{$\Pi$}
\noLine
\UnaryInfC{$\bot$}
\UnaryInfC{$A^{c}$}
\AxiomC{$\Pi'$}
\noLine
\UnaryInfC{$A \to B$}
\BinaryInfC{$B$}
\end{prooftree}

\noindent But this derivation is also not a legitimate derivation. However, we can replace this derivation by:

\begin{prooftree}
\AxiomC{$[A]$}
\AxiomC{$A\to B$}
\BinaryInfC{$B$}
\AxiomC{$[\neg B]$}
\BinaryInfC{$\bot$}
\UnaryInfC{$ \neg A$}
\noLine
\UnaryInfC{$\Pi$}
\noLine
\UnaryInfC{$\bot$}
\UnaryInfC{$B^{c}$}
\end{prooftree}


\noindent And now we can see that the formula $B$ that is derived depending on an application of classical reasoning has a \textit{classical nature} too.
This shows that the system $\ECI$ can, in a precise way, support Krauss' insight~\cite{Krauss} that an ecumenical perspective helps us identify where classical reasoning is actually needed (and, importantly, that we need not be classical everywhere, nor all the time). It also allows us to make explicit the consequences of invoking classical principles within a given derivation. We turn to this point next. 

\section{The system $\NEK$ and the curious case of classical conjunction}

The first ecumenical system for $\CL$ and $\IL$ was proposed and studied in 1992 by Peter Krauss~\cite{Krauss}, although he did not use the terminology ``ecumenical''\footnote{Krauss obtains this system by first defining an ecumenical system for $\CL$ and minimal logic ($\ML$), then extending it to a system for $\IL$ and $\CL$ by adding the rule $\bot$-elim. He proceeds to prove results for both ecumenical systems. Two ecumenical systems containing rules for $\IL$ and $\ML$ are presented by Barroso-Nascimento in \cite{Barroso-Nascimento}, one defining rules for operators (as in $\NE$) and one defining rules for formulas (as in $\ECI$).}. In addition to three rules for equality, a classical disjunction, a classical implication, and a classical existential quantifier, Krauss' system also has a classical conjunction $\wedge_{c}$\footnote{Although he accepts the existence of two conjunctions, Krauss claims that the classical mathematician seldom uses the classical one. In fact, he observes that the classical conjunction is not idempotent: from a proof of $A \wedge_c A$ one can derive only $\neg\neg A$, which always entails $A$ in classical logic, but not in the ecumenical setting.} and a classical universal quantifier $\forall_{c}$, retaining only negation and $\bot$ as a neutral operator\footnote{We observe that the idea of having two conjunctions appears in several works, such as Girard's Constructive Classical Logic~\cite{Girard91}, Liang and Miller's focused systems~\cite{LiaMil09}, as well as in ecumenical approaches to automated deduction~\cite{Emilie}.}.

In what follows we define a system $\NEK$, which is essentially the same system presented in \cite{PPDP25} (the only difference being the inclusion of classical atoms) and can be proven to be equivalent to Krauss' original system (modulo inclusion of equality and removal of classical atoms). This system is obtained by adding the rules in Fig. \ref{fig:NEK} to Prawitz's $\NE$.

\begin{figure}[t]
\[
\begin{array}{lc@{\qquad}l}
\infer[{\land_c\mbox{-elim}_j}]{\bot}{A_1 \land_{c} A_2 & \neg A_j}
& &
\infer[{\land_c\mbox{-int}}]{A \land_{c} B}{\deduce{\bot}{\deduce{}{\deduce{\Pi_1}{\deduce{}{[ \neg A]}}}}
 &\deduce{\bot}{\deduce{}{\deduce{\Pi_2}{\deduce{}{[\neg B]}}}}} 
\\
\bigskip
\\
\infer[\forall_c\mbox{-elim}]{\neg \neg A(t/x)}{ \forall_c x A(x)}
& &
\infer[\forall_c\mbox{-int}]{\forall_c x A(x)}{ \deduce{\bot}{\deduce{}{\deduce{\Pi}{\deduce{}{[\exists_i x \neg A(x)]}}}}}
\end{array}
\]

\caption{Rules added to $\NE$ in order to obtain the system $\NEK$. Since they are no longer neutral, intuitionistic conjunctions in $\NEK$ are also represented by $\land_{i}$ instead of $\land$.}\label{fig:NEK}
\end{figure}

\noindent We can easily show that classical conjunction $\wedge_{c}$ satisfies our \textit{internal Glivenko theorems}.  
\begin{lemma} 
$B \wedge_{c} C\vdash_{\NEK} \neg \neg (B \wedge_{i} C)$
\end{lemma}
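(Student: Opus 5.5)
The plan is to build a single natural deduction derivation in $\NEK$, in the same style as the proofs of the internal Glivenko theorem for $\vee_c$, $\to_c$ and $\exists_c$ in $\NE$. Since the conclusion is a double negation, I will assume $[\neg(B \wedge_i C)]^{3}$, derive $\bot$ from it together with the premise $B \wedge_c C$, and close with $\to$-int discharging assumption 3. The only tools needed are the two elimination rules $\wedge_c\mbox{-elim}_j$, which yield $\bot$ from $B \wedge_c C$ together with $\neg B$ (resp. $\neg C$), and the intuitionistic rule $\wedge_i$-int.

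The main obstacle is that, unlike the disjunction case, we cannot produce $\neg B$ and $\neg C$ independently. Building $B \wedge_i C$ requires $B$ and $C$ simultaneously, so the refutations have to be nested. This is why the classical elimination rule will be used twice.

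\begin{enumerate}
\item Assume $[B]^{1}$ and $[C]^{2}$ and apply $\wedge_i$-int to obtain $B \wedge_i C$.
\item Apply $\to$-elim with $[\neg(B \wedge_i C)]^{3}$ to obtain $\bot$, and discharge assumption 2 to conclude $\neg C$.
\item Apply $\wedge_c\mbox{-elim}_2$ to $B \wedge_c C$ and $\neg C$ to obtain $\bot$. This subderivation still depends on $[B]^{1}$, which we now discharge to conclude $\neg B$.
\item Apply $\wedge_c\mbox{-elim}_1$ to $B \wedge_c C$ and $\neg B$ to obtain $\bot$, which now depends only on the premise and on assumption 3.
\item Discharge assumption 3 to conclude $\neg\neg(B \wedge_i C)$.
\end{enumerate}

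In the written proof I will display this as a prooftree with the premise $B \wedge_c C$ occurring twice as an open assumption. I will then remark that the argument shows $\wedge_c$ fits the pattern of the earlier internal Glivenko theorem. I will also note that, exactly as in the earlier proof for $\NE$, the second (negated) version $\neg(B \wedge_c C) \vdash_{\NEK} \neg(B \wedge_i C)$ follows from $B \wedge_i C \vdash_{\NEK} B \wedge_c C$. The latter is obtained by deriving $B$ and $C$ with $\wedge_i$-elim, refuting the hypotheses $\neg B$ and $\neg C$ of $\wedge_c$-int, and then applying $\wedge_c$-int.
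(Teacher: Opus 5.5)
Your proof is correct and is essentially the paper's own derivation: assumptions $[B]^1,[C]^2$, $\wedge_i$-int against $[\neg(B\wedge_i C)]^3$, two nested applications of $\wedge_c$-elim with $B\wedge_c C$ occurring twice, and a final discharge of $3$. The only difference is the order of the inner discharges, which does not matter. The paper discharges $B$ first and applies $\wedge_c$-elim$_1$ innermost, whereas you discharge $C$ first and apply $\wedge_c$-elim$_2$ innermost.
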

\begin{prooftree}
\AxiomC{$B \wedge_{c} C$}
\AxiomC{$B \wedge_{c} C$}
\AxiomC{$[B]^{1}$}
\AxiomC{$[C]^{2}$}
\BinaryInfC{$B \wedge_{i} C$}
\AxiomC{$[\neg (B \wedge_{i} C)]^{3}$}
\BinaryInfC{$\bot$}
\RightLabel{$1$}
\UnaryInfC{$\neg B$}
\BinaryInfC{$\bot$}
\RightLabel{$2$}
\UnaryInfC{$\neg C$}
\BinaryInfC{$\bot$}
\RightLabel{$3$}
\UnaryInfC{$\neg \neg (B \wedge_{i} C)$}
\end{prooftree}

\vspace{0.5cm}

\begin{lemma} 
$B \wedge_{i} C \vdash_{\NEK} B \wedge_{c} C$
\end{lemma}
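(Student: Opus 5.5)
The plan is to build a single natural deduction derivation in $\NEK$ that ends with an application of $\land_c$-int. That rule needs two subderivations of $\bot$: one from the hypothesis $[\neg B]$ and one from the hypothesis $[\neg C]$. Each is discharged by the rule, and the premise $B \wedge_{i} C$ remains the only open assumption.

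The two subderivations come straight from the intuitionistic conjunction eliminations. For the first, apply $\wedge_i$-elim$_1$ to $B \wedge_{i} C$ to get $B$. Then combine $B$ with the hypothesis $[\neg B]^{1}$ by $\to_i$-elim to get $\bot$, recalling that $\neg B$ is $B \to \bot$. Symmetrically, $\wedge_i$-elim$_2$ gives $C$, and together with $[\neg C]^{2}$ this yields $\bot$. A final step of $\land_c$-int, discharging marks $1$ and $2$, concludes $B \wedge_{c} C$. In the paper's bussproofs style:
\begin{prooftree}
\AxiomC{$B \wedge_{i} C$}
\UnaryInfC{$B$}
\AxiomC{$[\neg B]^{1}$}
\BinaryInfC{$\bot$}
\AxiomC{$B \wedge_{i} C$}
\UnaryInfC{$C$}
\AxiomC{$[\neg C]^{2}$}
\BinaryInfC{$\bot$}
\RightLabel{$1,2$}
\BinaryInfC{$B \wedge_{c} C$}
\end{prooftree}

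There is no real obstacle here. This is the conjunction instance of the general fact, used for Prawitz's operators in the proof of the internal second Glivenko theorem, that $C^{*} \vdash C$. The only thing to check is that the $\wedge$-elim rules of $\NE$ survive in $\NEK$ as rules for $\wedge_i$, which Fig.~\ref{fig:NEK} states explicitly.

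Combined with the previous lemma, this result gives the internal second Glivenko theorem for $\wedge_c$ by contraposition: from $\neg (B \wedge_{c} C)$ and an assumption $B \wedge_{i} C$ one derives $\bot$, hence $\neg (B \wedge_{c} C) \vdash_{\NEK} \neg (B \wedge_{i} C)$.
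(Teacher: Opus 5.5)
Your derivation is correct and is exactly the paper's proof: apply $\wedge_i$-elimination twice, combine each result with the discharged hypothesis $[\neg B]^{1}$ or $[\neg C]^{2}$ to get $\bot$, then close with $\wedge_c$-int discharging both. One minor aside: your own contraposition argument shows that the corollary $\neg(B \wedge_{c} C)\vdash_{\NEK}\neg(B \wedge_{i} C)$ follows from this lemma alone, so the previous lemma is not needed for it.
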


\begin{prooftree}
\AxiomC{$B \wedge_{i} C$}
\UnaryInfC{$B$}
\AxiomC{$[\neg B]^{1}$}
\BinaryInfC{$\bot$}
\AxiomC{$B \wedge_{i} C$}
\UnaryInfC{$C$}
\AxiomC{$[\neg C]^{2}$}
\BinaryInfC{$\bot$}
\RightLabel{$1,2$}
\BinaryInfC{$B \wedge_{c} C$}
\end{prooftree}
From this lemma we can directly conclude the following:
\begin{corollary}
$\neg (B \wedge_{c} C)\vdash_{\NEK} \neg (B \wedge_{i} C)$
\end{corollary}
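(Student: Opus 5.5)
The corollary follows from the preceding lemma, $B \wedge_{i} C \vdash_{\NEK} B \wedge_{c} C$, by contraposition, in the same way that the internal second Glivenko theorem for $\NE$ followed from $C^{*} \vdash_{\NE} C$. Negation in $\NEK$ is defined as $A \to_{i} \bot$, so it suffices to use $\to_{i}$-int and $\to_{i}$-elim. No classical rule is needed beyond those already used inside the lemma.

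Concretely, I would assume $\neg (B \wedge_{c} C)$ as an open hypothesis and $B \wedge_{i} C$ as a discharged hypothesis $[B \wedge_{i} C]^{3}$. I would then plug in the derivation of the lemma with $[B \wedge_{i} C]^{3}$ at its two top occurrences. That derivation uses $\wedge$-elim twice, obtains $\bot$ against $[\neg B]^{1}$ and $[\neg C]^{2}$, and applies $\wedge_{c}$-int discharging 1 and 2 to reach $B \wedge_{c} C$. Next I would apply $\to_{i}$-elim with the hypothesis $\neg (B \wedge_{c} C)$ to get $\bot$. Finally, $\to_{i}$-int discharging 3 yields $\neg (B \wedge_{i} C)$, with only $\neg (B \wedge_{c} C)$ remaining open.

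\begin{prooftree}
\AxiomC{$[B \wedge_{i} C]^{3}$}
\UnaryInfC{$B$}
\AxiomC{$[\neg B]^{1}$}
\BinaryInfC{$\bot$}
\AxiomC{$[B \wedge_{i} C]^{3}$}
\UnaryInfC{$C$}
\AxiomC{$[\neg C]^{2}$}
\BinaryInfC{$\bot$}
\RightLabel{$1,2$}
\BinaryInfC{$B \wedge_{c} C$}
\AxiomC{$\neg (B \wedge_{c} C)$}
\BinaryInfC{$\bot$}
\RightLabel{$3$}
\UnaryInfC{$\neg (B \wedge_{i} C)$}
\end{prooftree}

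There is essentially no obstacle here. The only point needing care is bookkeeping. The discharge labels 1 and 2 belong to $\wedge_{c}$-int and must not be confused with label 3, which belongs to the final $\to_{i}$-int. Both copies of $B \wedge_{i} C$ must be discharged by the same final step. No eigenvariable conditions arise.
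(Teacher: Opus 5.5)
Your proposal is correct and follows the paper's route: the paper obtains the corollary directly from the lemma $B \wedge_{i} C \vdash_{\NEK} B \wedge_{c} C$ by contraposition, just as the internal second Glivenko theorem for $\NE$ follows from $C^{*} \vdash_{\NE} C$. Your derivation spells out that step explicitly: it embeds the lemma's proof under the discharged hypothesis $[B \wedge_{i} C]^{3}$, applies negation elimination against $\neg(B \wedge_{c} C)$, and then applies negation introduction.
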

It turns out that, when we restrict attention to the propositional fragment of $\ECI$, we can emulate the behavior of the rules $I_{c}$ and $E_{c}$ of $\ECI$ with respect to conjunction by means of applications of $\wedge_{c}$-int and $\wedge_{c}$-elim$_j$ respectively. This will be addressed in the next section.

\section{Deductive equivalence of $\forall$-free $\NEK$ and $\ECI$}

In this section we prove that, in first-order logic without universal quantification, $\NEK$ and $\ECI$ are deductively equivalent. Equivalence results for $\ECI$ and Prawitz's $\NE$ in the common fragment of their languages are established in Theorem 4 of \cite{Barroso-Nascimento}. Since $\NEK$ is obtained from $\NE$ by adding classical conjunction, it therefore suffices to consider the induction steps corresponding to $\land_{c}$. For reasons discussed in the next section, this equivalence does not hold for $\NEK$ and $\ECI$ in the presence of universal quantification.


There are at least two standard approaches to comparing two (or more) ecumenical logics.
In the first approach, one shows that for every natural deduction rule $R$ of $L_1$ whose formulation uses only a fragment of the language shared by $L_1$ and $L_2$, whenever the premises of $R$ are derivable in $L_2$, so is its conclusion (and symmetrically, the same is shown for the rules of $L_2$ in $L_1$). This is the strategy adopted in \cite{Barroso-Nascimento} to establish proof-theoretic equivalence between $\ECI$ and $\NE$.
In the second approach, rather than restricting attention to the shared fragment of the language, one defines a translation mapping each formula of the stronger logic to a formula of the weaker one. This strategy is used, for instance, in \cite{PR17} to obtain certain semantic results.
Since the choice between these approaches is largely a matter of convenience, we adopt the second one here.

\begin{definition}
    $t\NEK$ is defined as follows, where $A$ is a formula and $\Gamma$ a set of formulas of $\ECI$ not containing any universal quantifiers:

    \begin{enumerate}
        \item $t\NEK[p ^{i}] = p ^{i}$, for atomic $p$ and $i \in \{ i, c\}$;
        \smallskip
        \item $t\NEK[\bot] = t\NEK[(\bot)^{c}] =  \bot$;
        \smallskip
        \item $t\NEK[A \star B] = t\NEK[A] \star t\NEK[B]$, for $\star \in \{\land, \lor, \to \};$
        \smallskip
         \item $t\NEK[(A \star B)^{c}] = t\NEK[A] \star_{c} t\NEK[B]$, for $\star \in \{\land, \lor, \to \};$
         \smallskip
         \item $t\NEK[\exists x A] = \exists_{i} x \ t\NEK [A]$;
         \smallskip
         \item $t\NEK[(\exists x A)^{c}] = \exists_{c} x \ t\NEK [A]$;
         \smallskip
       \item $t\NEK [\Gamma] = \{t\NEK [A] \  | \ A \in \Gamma \}$.
    \end{enumerate}
\end{definition}

\begin{theorem}\label{thm:deductequiv}
    $\Gamma \vdash_{\ECI} A$ iff $t\NEK[\Gamma] \vdash t\NEK[A]$.
\end{theorem}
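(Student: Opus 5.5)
Both directions go by induction on the height of derivations, using the $\NE$--$\ECI$ equivalence of Theorem~4 of \cite{Barroso-Nascimento} on the shared fragment. After that, only the clauses involving $\land$ and the generic rules $I_C$, $E_C$ need separate treatment. A first lemma I will prove, by induction on $A$, is that for every $\forall$-free $\ECI$ formula $A$,
\[
(\dagger)\qquad t\NEK[A^{c}] \dashv\vdash_{\NEK} \neg\neg\, t\NEK[A],
\]
together with the special fact $t\NEK[(\neg A)^{c}]\dashv\vdash \neg t\NEK[A]$ (since $\neg A$ is $A\to\bot$, $(\neg A)^c$ translates to $t\NEK[A]\to_c\bot$). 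Case by case: for $\vee,\to,\exists$ and atoms, $(\dagger)$ is the internal Glivenko theorem of Section 2 (left to right) combined with the reverse derivations using $\vee_c$-int, $\to_c$-int, $\exists_c$-int and $P_c$-int from assumptions $\neg\neg C^{*}$. For $\land$ it is the lemma $B\land_c C\vdash_{\NEK}\neg\neg(B\land_i C)$ together with a derivation of $B\land_c C$ from $\neg\neg(B\land_i C)$. That derivation uses $\land_c$-int: from $[\neg B]$ and $\neg\neg(B\land_i C)$ we get $\bot$ by deriving $\neg(B\land_i C)$, and symmetrically for $C$. For $\bot$, $t\NEK[\bot^c]=\bot\dashv\vdash\neg\neg\bot$. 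If $A$ is already of the form $D^c$ (iterated labels), I will note that the translation collapses labels, and $\neg\neg\neg\neg X\dashv\vdash\neg\neg X$, so I must check the definition covers this. I expect either to read $(D^c)^c$ as $D^c$, or to treat $A^c$ as an atom-like case using idempotence of double negation.

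For the left-to-right direction, I will translate an $\ECI$ derivation rule by rule. Intuitionistic rules map to the corresponding $\NE$ rules, with $\land$ mapped to $\land_i$. An instance of $I_C$ (from $[\neg A]\ldots\bot$ infer $A^c$) becomes a derivation of $\bot$ from $\neg t\NEK[A]$, hence $\neg\neg t\NEK[A]$ by $\to$-int, and then $t\NEK[A^c]$ by $(\dagger)$. An instance of $E_C$ becomes $t\NEK[A^c]$, hence $\neg\neg t\NEK[A]$ by $(\dagger)$, and then $\bot$ with $\neg t\NEK[A]$. Since $\neg$ commutes with the translation, discharged hypotheses are respected.

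For the converse I define the inverse translation $s$ (sending $\star_c$ to $(\cdot\star\cdot)^c$, $\land_i$ to $\land$, $P_c$ to $P^c$). This is a bijection onto $\forall$-free $\ECI$ formulas up to the $\bot^c$ identification. I then show that every $\NEK$ rule is derivable in $\ECI$ under $s$. The $\NE$ rules are handled by Theorem~4 of \cite{Barroso-Nascimento}; for $\land_c$-int and $\land_c$-elim$_j$ I emulate them with $I_C$/$E_C$, as anticipated at the end of the previous section. For $\land_c$-int, from $\neg B\vdash\bot$ and $\neg C\vdash\bot$ derive $\bot$ from $\neg(B\land C)$: assume $B$ and $C$, refute $\neg(B\land C)$, discharge to get $\neg C$, obtain $\bot$, and so $\neg B$. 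Then $I_C$ yields $(B\land C)^c$. For $\land_c$-elim$_j$, from $(A_1\land A_2)^c$ and $\neg A_j$ infer $\neg(A_1\land A_2)$ by $\land$-elim, and then $E_C$. The iff then follows since $s(t\NEK[A])$ is $\ECI$-interderivable with $A$.

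The main obstacle I expect is the bookkeeping in $(\dagger)$ and in the inverse translation for degenerate cases: $\bot^c$, nested labels, and labeled negations. In these cases $t\NEK$ is not literally injective, so the converse direction must go via interderivability rather than identity. I will handle these by the explicit equivalences above before running the main induction.
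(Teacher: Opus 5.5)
Your proposal is correct. In the direction from $\NEK$ to $\ECI$ you follow the paper: you emulate $\land_c$-int and $\land_c$-elim$_j$ by $I_C$ and $E_C$ with essentially the same derivations, and you defer the $\NE$ rules to Theorem~4 of \cite{Barroso-Nascimento}. Your explicit inverse translation and your remarks on $\bot^c$ and nested labels only spell out bookkeeping the paper leaves implicit.

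The direction from $\ECI$ to $\NEK$ is organised differently. The paper works case by case on the shape of the labelled formula. It gives a separate derivation for $I_C$ with conclusion $(A\land B)^c$, and for $E_C$ on $(A\land B)^c$ and on $(A\to B)^c$, and cites the thesis for the other operators. You instead prove once the lemma $(\dagger)$, $t\NEK[A^c]\dashv\vdash_{\NEK}\neg\neg\,t\NEK[A]$. This is just the internal Glivenko theorems for $\vee_c,\to_c,\exists_c,\wedge_c$ together with their converses. After that, $I_C$ and $E_C$ are simulated uniformly, whatever the shape of $A$, by $\to$-introduction and $\to$-elimination on $\neg\neg\,t\NEK[A]$.

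The paper's tailored derivations are exactly what your composite derivations become once the detour through $\neg\neg\,t\NEK[A]$ is removed. For instance, its $E_C$ case for $\land_c$ is the lemma $B\wedge_c C\vdash_{\NEK}\neg\neg(B\wedge_i C)$ with $\Pi^{*}_{2}$ substituted for the discharged hypothesis.

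Your route buys three things:
\begin{itemize}
\item It treats all classical operators and classical atoms in a self-contained way.
\item It makes explicit that the theorem is the $\NEK$-counterpart of $A^c\dashv\vdash_{\ECI}\neg\neg A$, so $t\NEK$ and $t\ECI$ agree up to provable equivalence.
\item It pinpoints the obstruction for $\forall$. Extending $t\NEK$ by $(\forall xA)^c\mapsto\forall_c x\,t\NEK[A]$ would break $(\dagger)$, because $\forall_c xA$ and $\neg\neg\forall_i xA$ are not interderivable, as the paper shows in its discussion of classical universal quantification.
\end{itemize}
The paper's route, in exchange, yields explicit detour-free $\NEK$ derivations.
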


\begin{proof}
By induction on the length of the derivations, in which we consider the last rule applied in the deduction (if any). The bases case is trivial, as are the cases including introduction and elimination rules for $\neg A$, $A \land B$, $A \lor B$, $A \to B$, $\exists x A$ and classical atoms $P^n_c(t_{1},...,t_{n})$. The step for $(\bot)^{c}$ is also trivial (since $\neg \bot$ is a theorem of $\ECI$). This means that we only have to deal with classical operators. The proofs for $(A \to B)^{c}$, $(A \lor B)^{c}$ and $(\exists x A)^{c}$ can be found in \cite[pgs. 39-42 and 93]{Barroso-Nascimento} and are thus omitted -- with the exception of the case of applications of $E_{c}$ with one premise of shape $(A \to_{c} B)$, which we simplify here.

\medskip

$(\Longrightarrow)$ We show that $\Gamma \vdash_{\NE} A$ implies $t\NEK[\Gamma] \vdash_{\NE} t\NEK[A]$.

\medskip

In order to ease the notation, we simply write $A$ instead of $t\NEK[A]$ when dealing with deductions in $\NEK$. This results in an ambiguity in the case of $\bot$, so we explicitly stipulate that occurrences of $\bot$ specifically stand for $t\NEK[\bot]$.

\begin{enumerate}
    \item The derivation ends with an application of $I_{C}$ with conclusion $A \land_{c} B$. Then it has the following shape:

    \begin{prooftree}
        \AxiomC{$[\neg(A \land B)]$}
        \noLine
        \UnaryInfC{$\Pi$}
        \noLine
        \UnaryInfC{$\bot$}
        \RightLabel{$I_{C}$}
        \UnaryInfC{$(A \land B)^{c}$}
    \end{prooftree}

The inductive hypothesis yields a deduction $\Pi^{*}$ of $\bot$ possibly depending on $\neg(A \land_{i} B)$. We can construct the following derivation of $A \land_{c} B$ in $\NEK$:

\begin{prooftree}
\AxiomC{$[A \wedge_{i} B]^{1}$}
\UnaryInfC{$A$}
\AxiomC{$[\neg A]^{3}$}
\BinaryInfC{$\bot$}
\RightLabel{1}
\UnaryInfC{$[\neg (A \wedge_{i} B)]$}
\noLine
\UnaryInfC{$\Pi^{*}$}
\noLine
\UnaryInfC{$\bot$}
\AxiomC{$[A \wedge_{i} B]^{1}$}
\UnaryInfC{$B$}
\AxiomC{$[\neg  B]^{4}$}
\BinaryInfC{$\bot$}
\RightLabel{1}
\UnaryInfC{$[\neg (A \wedge_{i} B)]$}
\noLine
\UnaryInfC{$\Pi^{*}$}
\noLine
\UnaryInfC{$\bot$}
\RightLabel{3,4}
\BinaryInfC{$A \wedge_{c} B$}
\end{prooftree}

\item The derivation ends with an application of $E_{C}$ which has one premise of shape $(A \land B)^{c}$. Then it has the following shape:

\begin{prooftree}
   \AxiomC{$\Pi_{1}$}
   \noLine
   \UnaryInfC{$(A \land B)^{c}$}
   \AxiomC{$\Pi_{2}$}
   \noLine
   \UnaryInfC{$\neg(A \land_{i} B)$}
   \RightLabel{$E_{C}$}
   \BinaryInfC{$\bot$}
\end{prooftree}

The inductive hypothesis yields a deduction $\Pi^{*}_{1}$ of $A \land_{c} B$ and a deduction $\Pi^{*}_{2}$ of $\neg(A \land_{i} B)$. We can construct the following derivation of $\bot$ in $\NEK$:

\begin{prooftree}
\AxiomC{$\Pi^{*}_{1}$}
\noLine
\UnaryInfC{$A \wedge_{c} B$}
\AxiomC{$\Pi^{*}_{1}$}
\noLine
\UnaryInfC{$A \wedge_{c} B$}
\AxiomC{$[A]^{1}$}
\AxiomC{$[B]^{2}$}
\BinaryInfC{$A \wedge_{i} B$}
\AxiomC{$\Pi^{*}_{2}$}
\noLine
\UnaryInfC{$\neg (A \wedge_{i} B)$}
\BinaryInfC{$\bot$}
\RightLabel{1}
\UnaryInfC{$\neg A$}
\BinaryInfC{$\bot$}
\RightLabel{2}
\UnaryInfC{$\neg B$}
\BinaryInfC{$\bot$}
\end{prooftree}
\noindent

\item The derivation ends with an application of $E_{C}$ with a premise $(A \to B)^{c}$. Then we do the following:

\begin{prooftree}
\AxiomC{$\Pi^{*}_{1}$}
\UnaryInfC{$A \to_{c} B$}
\AxiomC{$[A]^{2}$}
    \AxiomC{$[B]^{1}$}
    \UnaryInfC{$A \to_{i} B$}
    \noLine
    \AxiomC{$\Pi^{*}_{2}$}
    \UnaryInfC{$\neg (A \to _{i} B)$}
    \BinaryInfC{$\bot$}
    \RightLabel{$1$}
    \UnaryInfC{$\neg B$}
    \TrinaryInfC{$\bot$}
    \UnaryInfC{$B$}
    \RightLabel{$2$}
    \UnaryInfC{$A \to_{i} B$}
     \AxiomC{$\Pi^{*}_{2}$}
     \noLine
    \UnaryInfC{$\neg (A \to _{i} B)$}
    \BinaryInfC{$\bot$}
\end{prooftree}

This is a simplification of the reduction in \cite{Barroso-Nascimento}.

\end{enumerate}

\medskip

\noindent $(\Longleftarrow)$ We show that $t\NEK{[\Gamma]} \vdash_{\NEK} t\NEK[A]$ implies $\Gamma \vdash_{\ECI} A$. Once again we only prove the inductive step for $A \land_{c} B$; the remaining cases are proved in \cite{Barroso-Nascimento}.

\begin{enumerate}

\medskip

    \item The derivation ends with an application of $I \lor_{c}$. Then it has the following shape:

    \begin{prooftree}
        \AxiomC{$[\neg A]$}
        \noLine
        \UnaryInfC{$\Pi_{1}$}
        \noLine
        \UnaryInfC{$\bot$}
        \AxiomC{$[\neg B]$}
        \noLine
        \UnaryInfC{$\Pi_{2}$}
        \noLine
        \UnaryInfC{$\bot$}
        \RightLabel{$I \land_{c}$}
        \BinaryInfC{$A \land_{c} B$}
    \end{prooftree}

The inductive hypothesis yields two deductions $\Pi^{*}_{1}$ and $\Pi^{*}_{2}$. We can construct the following derivation of $(A \land B)^{c}$ in $\ECI$:

\begin{prooftree}
\AxiomC{$[\neg (A \land B)]^{1}$}
\AxiomC{$[A]^{2}$}
\AxiomC{$[B]^{3}$}
\BinaryInfC{$A \land B$}
\BinaryInfC{$\bot$}
\RightLabel{$2$}
\UnaryInfC{$\neg A$}
\noLine
\UnaryInfC{$\Pi^{*}_{1}$}
\noLine
\UnaryInfC{$\bot$}
\RightLabel{$3$}
\UnaryInfC{$\neg B$}
\noLine
\UnaryInfC{$\Pi^{*}_{2}$}
\noLine
\UnaryInfC{$\bot$}
\RightLabel{$1$}
\UnaryInfC{$(A \land B)^{c}$}
\end{prooftree}

\item The derivation ends with an application of $E \land_{c}$. Then it has the following shape:

\begin{prooftree}
   \AxiomC{$\Pi_{1}$}
   \noLine
   \UnaryInfC{$A_{1} \land_{c} A_{2}$}
   \AxiomC{$\Pi_{2}$}
   \noLine
   \UnaryInfC{$\neg A_{j}$}
   \RightLabel{$E_{C}$}
   \BinaryInfC{$\bot$}
\end{prooftree}

The inductive hypothesis yields two deductions $\Pi^{*}_{1}$ and $\Pi^{*}_{2}$. We can construct the following derivation in $\ECI$:

\begin{prooftree}
\AxiomC{$\Pi_{1}^*$}
\noLine
\UnaryInfC{$(A_{1} \wedge A_{2})^{c}$}
\AxiomC{$[A_{1} \wedge A_{2}]^{1}$}
\UnaryInfC{$A_{j}$}
\AxiomC{$\Pi_{2}^*$}
\noLine
\UnaryInfC{$\neg A_{j}$}
\BinaryInfC{$\bot$}
\RightLabel{1}
\UnaryInfC{$\neg (A_{1} \wedge A_{2})$}
\BinaryInfC{$\bot$}
\end{prooftree}

\end{enumerate}

\end{proof}

This means that, in the propositional fragment, $\ECI$ and $\NEK$ are essentially the same logic, especially since $(\bot)^{c}$ and $\bot$ are equivalent \cite[pg. 55]{Barroso-Nascimento}.

\section{Classical universal quantification}


It is usually said that, from an ecumenical perspective, classical logicians and intuitionistic logicians would both recognize themselves in the ecumenical system, in the sense that everything they would like to accept is accepted in the ecumenical system.
 Although true for the intuitionistic logician, obviously this is not completely true in the case of the classical logician; for example, classical implication in $\NE$ system does not satisfy the rule  \textit{modus ponens}: $A , A \to_c B\vdash_{\NE} B$. 
 
In the case of first-order logic, we know we can prove an \textit{ecumenical} result corresponding to  $\neg \forall x\neg A(x) \vdash \exists xA(x)$. But what about $\neg \forall x\ A(x) \vdash \exists x\neg A(x)$? In $\NEK$, 
the introduction and elimination rules for the classical universal quantifier $\forall_{c}$ (see Fig.~\ref{fig:NEK}) 
are clearly \textit{harmonic}:
\begin{prooftree}
\AxiomC{$[\exists_{i} x \neg A(x)]$}
\noLine
\UnaryInfC{$\Pi$}
\noLine
\UnaryInfC{$\bot$}
\RightLabel{$\forall_{c}$-int}
\UnaryInfC{$\forall_{c} xA(x)$}
\RightLabel{$\forall_{c}$-elim}
\UnaryInfC{$\neg \neg A(t/x)$}
\DisplayProof
\qquad
\AxiomC{reduces to}
\DisplayProof
\qquad
\AxiomC{$[\neg A(t/x)]$}
\UnaryInfC{$\exists_{i} x \neg A(x)$}
\noLine
\UnaryInfC{$\Pi$}
\noLine
\UnaryInfC{$\bot$}
\UnaryInfC{$\neg \neg A(t/x)$}
\end{prooftree}

\noindent Moreover, we can easily prove $\neg \forall_{c} x A(x) \vdash_{\NEK} \exists_{c} x\neg A(x)$:

\begin{prooftree}
\AxiomC{$[\exists_{i} x\neg A(x)]^{2}$}
\AxiomC{$[\neg A(a/x)]^{1}$}
\AxiomC{$[\forall_{i} x \neg \neg A(x)]^{3}$}
\UnaryInfC{$\neg \neg A(a/x)$}
\BinaryInfC{$\bot$}
\RightLabel{$1$}
\BinaryInfC{$\bot$}
\RightLabel{$2$}
\UnaryInfC{$\forall_{c} xA(x)$}
\AxiomC{$\neg \forall_{c} x A(x)$}
\BinaryInfC{$\bot$}
\RightLabel{$3$}
\UnaryInfC{$\exists_{c} x\neg A(x)$}
\end{prooftree}

However, as expected, we do not have a Glivenko-type result for the classical universal quantifier $\forall_{c}$: $\vdash_{\NEK} \forall_{c} xA(x)$ does not imply $ \vdash_{\NEK} \neg \neg \forall_{i} xA(x)$ (and $\forall_{c} xA(x) \nvdash_{\NEK} \neg \neg \forall_{i} xA(x)$). It is interesting to observe that we do have a Glivenko-type result for the classical universal quantifier that corresponds to Glivenko's second theorem, $\neg \forall_{c} xA(x) \vdash_{\NEK} \neg \forall_{i} xA(x)$: 
\begin{prooftree}
\AxiomC{$[\exists_{i} x\neg A(x)]^{2}$}
\AxiomC{$[\forall_{i} xA(x)]^{3}$}
\UnaryInfC{$A(a/x)$}
\AxiomC{$[\neg A(a/x)]^{1}$}
\BinaryInfC{$\bot$}
\RightLabel{1}
\BinaryInfC{$\bot$}
\RightLabel{2}
\UnaryInfC{$\forall_{c} xA(x)$}
\AxiomC{$\neg \forall_{c} xA(x)$}
\BinaryInfC{$\bot$}
\RightLabel{3}
\UnaryInfC{$\neg \forall_{i} xA(x)$}
\end{prooftree}

But, as we saw, the Glivenko-type of results are somehow trivial in the system $\ECI$! Even for a universal formula we have that $(\forall xA(x))^{c} \vdash_{\ECI} \neg \neg \forall_{i} xA(x)$ and $\neg (\forall xA(x))^{c} \vdash_{\ECI} \neg \forall_{i} xA(x)$! If we now assume that in the formula $A(x)$ we have no occurrences of the label/constant $c$, we do have something that \textit{looks like} a full Glivenko's first theorem! But we know that Glivenko's first theorem does not extend to universal formulas! What's the trick here? 

In order to understand the real meaning of the Glivenko-type of results we can prove in $\ECI$ we will have a look at some relations between the behavior of the classical operator $\forall_{c}$ and the behavior of the label/constant $c$ applied to a universal formula.

We can emulate an application of the $I_{c}$ rule of $\ECI$ with conclusion $\forall_{c} xA(x)$ by an application of the $\forall_{c}$-Introduction rule in $\NEK$, as well as an application of the $\forall_{c}$-elim rule by an application of the $E_{c}$ rule of $\ECI$.

\begin{theorem} The following hold:
\begin{enumerate}
    \item If $ \neg \forall x A(x) \vdash_{\NEK} \bot$ then $ \vdash_{\NEK} \forall_{c}x A(x)$
    \item If $\vdash_{\ECI} (\forall x A(x))^{c}$ then $\vdash_{\ECI} \neg \neg  A(t/x)$.
\end{enumerate}
\end{theorem}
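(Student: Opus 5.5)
Both items are proved by building small derivations that copy the target rule of one system into the other. The only subtlety in item 1 is reading $\forall$ in the hypothesis: in $\NEK$ the only universal rules available to the premise are the neutral ($\forall_i$) rules of $\NE$. So I read $\neg \forall x A(x)$ as $\neg \forall_i x A(x)$, and the given derivation $\Pi$ of $\bot$ from that assumption is all we start with.

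\textbf{Item 1.} The rule $\forall_c$-int discharges $\exists_i x \neg A(x)$, not $\neg\forall_i x A(x)$. So the plan is to derive $\neg \forall_i x A(x)$ from the assumption $\exists_i x \neg A(x)$ and then graft $\Pi$ on top.
\begin{enumerate}
\item Assume $[\exists_i x\neg A(x)]^2$ and $[\forall_i x A(x)]^3$.
\item Instantiate the universal to $A(a/x)$ for a fresh $a$.
\item Combine $A(a/x)$ with the assumption $[\neg A(a/x)]^1$ to get $\bot$.
\item Apply $\exists_i$-elim, discharging $1$, to obtain $\bot$ depending only on $2$ and $3$.
\item Apply $\to_i$-int, discharging $3$, to get $\neg\forall_i x A(x)$.
\item Feed this into $\Pi$ to reach $\bot$, which now depends only on $\exists_i x \neg A(x)$.
\item Close with $\forall_c$-int, discharging $2$, to conclude $\forall_c x A(x)$ with no open assumptions.
\end{enumerate}
This is essentially the derivation displayed just before the statement, read in reverse. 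The eigenvariable condition holds because $a$ occurs nowhere in the open assumptions.

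\textbf{Item 2.} We are given a derivation $\Pi$ of $(\forall x A(x))^c$ in $\ECI$. The aim is $\neg\neg A(t/x)$, imitating $\forall_c$-elim by means of $E_C$.
\begin{enumerate}
\item Assume $[\neg A(t/x)]^1$.
\item Show $\neg \forall x A(x)$: from the assumption $[\forall x A(x)]^2$, $\forall$-elim gives $A(t/x)$, and together with $[\neg A(t/x)]^1$ this gives $\bot$. Discharging $2$ yields $\neg\forall x A(x)$.
\item Apply $E_C$ to $\Pi$ and $\neg\forall x A(x)$ to get $\bot$.
\item Discharge $1$ to conclude $\neg\neg A(t/x)$.
\end{enumerate}
Equivalently, this is Theorem~\ref{thm:G1} followed by the intuitionistic fact that $\neg\neg\forall x A(x) \vdash \neg\neg A(t/x)$.

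\textbf{Main obstacle.} Neither item involves real difficulty. The only point needing care is the one already settled above: in item 1, which universal quantifier the hypothesis refers to. If the hypothesis were read instead as $\neg\forall_c x A(x)$, we would first use the earlier derivation $\neg\forall_c x A(x)\vdash_{\NEK}\neg\forall_i x A(x)$ to reach the $\forall_i$ form, or adapt the argument accordingly.
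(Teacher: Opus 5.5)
Your two derivations coincide with the paper's (up to the numbering of discharge labels): in item 1 you read the hypothesis as $\neg\forall_i x A(x)$, and in item 2 you imitate $\forall_c$-elim by applying $E_C$ against $\neg\forall x A(x)$, exactly as the paper does. One caveat, on your closing aside only: the fact $\neg\forall_c x A(x)\vdash_{\NEK}\neg\forall_i x A(x)$ runs the wrong way for grafting a derivation whose open assumption is $\neg\forall_c x A(x)$, since the converse would require the failing $\forall_c x A(x)\vdash_{\NEK}\neg\neg\forall_i x A(x)$; under that alternative reading you should instead derive $\neg\forall_c x A(x)$ directly from $\exists_i x\neg A(x)$ using $\forall_c$-elim.
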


\begin{proof}
We can construct the following derivation in $\NEK$:
\begin{prooftree}
\AxiomC{$[\exists_{i} x \neg A(x)]^{3}$}
\AxiomC{$[\forall_{i} xA(x)]^{2}$}
\UnaryInfC{$A(a/x)$}
\AxiomC{$[\neg A(a/x)]^{1}$}
\BinaryInfC{$\bot$}
\RightLabel{1}
\BinaryInfC{$\bot$}
\RightLabel{2}
\UnaryInfC{$[\neg \forall_{i} xA(x)]$}
\noLine
\UnaryInfC{$\Pi$}
\noLine
\UnaryInfC{$\bot$}
\RightLabel{3}
\UnaryInfC{$\forall_{c} xA(x)$}
\end{prooftree}
We can also construct the following derivation in $\ECI$:
\begin{prooftree}
\AxiomC{$\Pi$}
\noLine
\UnaryInfC{$(\forall xA(x))^{c}$}
\AxiomC{$[\forall xA(x)]^{1}$}
\UnaryInfC{$A(t/x)$}
\AxiomC{$[\neg A(t/x)]^{2}$}
\BinaryInfC{$\bot$}
\RightLabel{1}
\UnaryInfC{$\neg \forall xA(x)$}
\RightLabel{$E_{c}$}
\BinaryInfC{$\bot$}
\RightLabel{2}
\UnaryInfC{$\neg \neg A(t/x)$}
\end{prooftree}
\end{proof}

But we cannot emulate the rule $E_{c}$ of $\ECI$ by means of the rule $\forall_{c}$-elim, and the rule  $\forall_{c}$-int by means of the rule $I_{c}$ of $\ECI$, and this means that the deductive behavior of the classical operator is different from the deductive behavior of the labeled formula. In a certain sense, a formula $\forall_{c} xA(x)$ can be \textit{interpreted} as $\neg \exists  x\neg A(x)$, whereas a formula $(\forall xA(x))^{c}$ can be \textit{interpreted} as $\neg \neg \forall xA(x)$ (see the translation $t\ECI$ in Definition \ref{TECI}), and these \textit{interpretations} are not intuitionistically equivalent! This peculiar behaviour is only observed in the universal quantifier, which is entirely expected because Glivenko's theorems hold for the $\forall$-free fragments of $\CL$ and $\IL$. And now it is possible to explain in which sense the Glivenko-type results are trivial in $\ECI$: what the theorem $A^{c} \vdash_{\ECI} \neg \neg A^{i}$ says can be interpreted simply as 
$\neg \neg A \vdash_{\ECI} \neg \neg A$, and in the particular case of universal formulas, as $\neg \neg \forall xA(x) \vdash_{\ECI} \neg \neg \forall xA(x)$. Mystery solved!

\section{Some conceptual remarks concerning the relation between $\ECI$ and $\NEK$}

Natural deduction allows us to fix the meaning of a logical connective by specifying the rules governing its use. The non-interdefinability of intuitionistic operators makes it so that intuitionistic specifications are expected to be independent of each other, but the same does not hold for classical specifications due to the similarity of grounds for classical use. Consequently, classical logic can be obtained by adding to intuitionistic logic autonomous rules permitting the use of classical proof principles (such as the \textit{classical reductio}), which modify the meaning of connectives by uniformly supplying them with indirect (classical) means of proof. As such, to obtain classical logic from intuitionistic logic it suffices to change the notion of proof by adding a rule which allows classical reasoning.

From a different perspective, it could be argued that the possibility of defining an autonomous classical rule is a byproduct of the uniformity of changes in the meaning of connectives, but  this does not imply the existence of a change in the concept of proof. The classical grounds for use must be included in the individual definition of each connective, but since they must be included in every connective it is also possible to implement this through the definition of a single autonomous rule. This means that classical proof rules are merely technical tools for changing the definition of all connectives at once, but that a conceptually faithful classical definition would have to include classical grounds for use directly into each of the introduction and elimination rules for operators instead.

The differences between $\ECI$-type systems and Prawitz-type systems (which includes $\NE$ and $\NEK$) seem to be explained by the differences between both perspectives. In the first one, just like in $\ECI$, the change operates at the level of proofs, so classical connectives are obtained by equipping intuitionistic logic with classical means of proof that indirectly change the meaning of connectives when used. In the second one, just like in $\NE$ and $\NEK$, the changes are made directly at the level of connectives, so we only have one notion of proof but are now allowed to use it together with connectives that are explicitly defined in terms of classical grounds for use. The difference is subtle but, as our study shows, not without consequence. In particular, the principles of each path lead us to distinct versions of logical ecumenism.

In a certain sense, we can summarize the differences between both approaches in the following way: while Prawitz's and Krauss' systems have actual \textit{classical operators}, the system $\ECI$ distinguish by means of the constant $c$ a classical behavior from an intuitionistic behavior of the same operator (remember that in $\ECI$ we have just one set of logical operators and formulas can be labeled with the \textit{constant} $c$ to indicate this \textit{classical behavior}). If we restrict the two approaches to the propositional or $\forall$-free fragment, it is indifferent whether we use $\ECI$ or $\NEK$. But this is not true when we add the universal quantifier, and the question now of which approach corresponds more faithfully to a classical universal quantifier is everything but negligible.

\section*{Acknowledgments}
First of all, we would like to thank Marcelo Coniglio for being such an inspiration and a good friend.

Barroso-Nascimento was supported in part by the Coordena\c c\~ao de Aperfei\c coamento de Pessoal de N\'ivel Superior - Brasil (CAPES) - Finance Code 001. Pereira is supported by the following projects: CAPES/COFECUB 88881.878969/2023-01, CNPq-313400/2021-0, and CNPq-Gaps and Gluts. Pimentel has received funding from the European Union's Horizon 2020 research and innovation programme under the Marie Sk\l odowska-Curie Grant agreement Number 101007627. Pimentel and Barroso-Nascimento are supported by the Leverhulme Trust grant RPG-2024-196. 

This work has benefitted from \href{https://www.dagstuhl.de/24341}{Dagstuhl Seminar 24341} ``Proof Representations: From Theory to Applications.''

The authors are grateful for the useful suggestions from the anonymous referee.


\end{document}